\newtheorem{theorem}{Theorem}
\newtheorem{definition}{Definition}
\newtheorem{remark}{Remark}
\newcommand{\rand}{\ensuremath{\stackrel{\$}{\leftarrow}}}
\newcommand{\WOTS}{{\rm W-OTS}\ensuremath{^+}}
\providecommand{\keywords}[1]{
	\begin{flushleft}
	\small\textbf{Keywords: } #1
	\end{flushleft}
	}
\providecommand{\affili}[1]{
	\begin{center} 
	\small #1
	\end{center}
	\vspace{.1cm}}
\title{Security analysis of the W-OTS$^+$ signature scheme: \\ Updating security bounds}
\author{\normalfont Mikhail A. Kudinov, Evgeniy O. Kiktenko, and Aleksey K. Fedorov}
\date{}
\begin{document}

\large
\maketitle
\par\vspace{-60pt}
\affili{Russian Quantum Center, Russia \\ QApp, Russia \\ \texttt{mishel.kudinov@gmail.com, e.kiktenko@rqc.ru, akf@rqc.ru}}

\begin{abstract}
	In this work, we discuss in detail a flaw in the original security proof of the W-OTS${^+}$ variant of the Winternitz one-time signature scheme, 
	which is an important component for various stateless and stateful many-time hash-based digital signature schemes.
	We update the security proof for the W-OTS${^+}$ scheme and derive the corresponding security level. 
	Our result is of importance for the security analysis of hash-based digital signature schemes.
\end{abstract}

\keywords{post-quantum cryptography, hash-based signatures, W-OTS signature.}

\section{Introduction}

Many commonly used cryptographic systems are vulnerable with respect to attacks with the use of large-scale quantum computers.
The essence of this vulnerability is the fact that quantum computers would allow solving discrete logarithm and prime factorization problems in polynomial time~\cite{Shor97},
which makes corresponding key sharing schemes and digital signatures schemes breakable. 
At the same time, there exist a number of mathematical operations for which quantum algorithms offer little advantage in speed.
The use of such mathematical operations in cryptographic purposes allows developing quantum-resistant (or post-quantum) algorithms, i.e. cryptographic systems that remain secure under the assumption that the attacker has a large quantum computer.
There are several classes of post-quantum cryptographic systems, which are based on error-correcting codes, lattices, multivariate quadratic equations and hash functions~\cite{Bernstein2017}. 

Among existing post-quantum cryptographic systems, hash-based signature schemes~\cite{Dods} attracted significant attention. 
This is easy to explain since the security of hash-based cryptographic primitives is a subject of extended research activity, and hash functions are actively used in the existing cryptographic infrastructure. 
One of the main components of their security is as follows: For hash functions finding a pre-image for a given output string is computationally hard. 
Up to date known quantum attacks are based on Grover’s algorithm~\cite{Grover}, which gives a quadratic speed-up in the brute-force search. 
Quantum attacks, in this case, are capable to find (i) preimage, (ii) second preimage, and (iii) collision, with time growing sub-exponentially with a length of hash function output.
Moreover, the overall performance of hash-based digital signatures makes them suitable for the practical use.
Several many-time hash-based digital signatures schemes are under consideration for standardization by NIST~\cite{NISThases} and IETF~\cite{WEBSITE:1,WEBSITE:2}. 

We note that still the cryptographic security of hash-based digital signatures is a subject of ongoing debates, so security proofs for such schemes regularly appear (see e.g. \cite{Husling,SPHINCS,Mitigating,SPHINCS+submission,GravitySPHINCS,SPHINCSFramework,WEBSITE:3}).
These studies are partially focused on the security of basic building blocks of many-time hash-based digital signatures, which are one-time signature scheme.
In particular, a variant of the Winternitz signature scheme, which is known as W-OTS${^+}$ is considered.
The original security proof for the W-OTS${^+}$ scheme is presented in Ref.~\cite{Husling}, and the W-OTS${^+}$ scheme is used in XMSS(-MT)~\cite{WEBSITE:2}, SPHINCS~\cite{SPHINCS}, Gravity SPHINCS~\cite{GravitySPHINCS}, and SPHINCS$^+$~\cite{SPHINCS+submission} hash-based digital signatures.
The security of many-time digital signatures obviously depends on the security level of the used one-time signature scheme.

In this work, we study the security of the W-OTS${^+}$ signature scheme.
We identify security flaws in the original security proof for W-OTS${^+}$, which lead to the underestimated level of the security. 
We modify the security analysis of the W-OTS${^+}$ scheme. 

The paper is organized as follows. 
We introduce necessary definitions and notations as well as describe the W-OTS$^{+}$ scheme in Sec.~\ref{sec:prelims}.
In Sec.~\ref{sec:security} we provide a detailed updated security analysis of the W-OTS$^{+}$ and discuss its differences from the previous version.
We conclude in Sec.~\ref{sec:concl}.

\section{Preliminaries} \label{sec:prelims}

\subsection{One-time and many-time hash-based signatures}

The Winternitz one-time signature (W-OTS)~\cite{Merkle,Even} has been introduced as an optimization of the seminal Lamport one-time signature scheme~\cite{Lamport}.
In order to use such one-time signature in practice several its modifications have been discussed. 
In particular, the W-OTS$^{+}$ scheme has received a significant attention in the view of standardization processes, in which one of the candidates is the XMSS signature that uses the W-OTS$^{+}$~\cite{NISThases}.

It order to use hash-based digital signatures in practice one should make them usable for many times. 
In order to do so it is possible to use Merkle trees. 
Using a root of the tree one can authenticate public keys of many one-time signature.
This idea is used in several many-time hash-based signatures based on the W-OTS${^+}$ scheme.
The security of many-time digital signatures clearly depends on the security level of the used one-time signature scheme.
The original security proof for the W-OTS${^+}$ scheme is presented in Ref.~\cite{Husling}.

We note that there are other modifications of the W-OTS scheme (e.g. see~\cite{WEBSITE:1,Johannes}), however they are beyond the scope of the present paper.

\subsection{Definitions and notations}

We start our discussion with introducing basic definitions and notations also used in Ref.~\cite{Husling}.
Let $x \rand X$ denote an element $x$ chosen uniformly at random from some the set $X$.
Let $y \leftarrow {\sf Alg}(x)$ denote an output of the algorithm ${\sf Alg}$ processed on the input $x$.
We write $\log$ instead of $\log_2$ and denote a standard bitwise exclusive or operation with $\oplus$, 
$\lceil \cdot \rceil$ and $\lfloor \cdot \rfloor$ stand for standard ceiling and floor functions.

\begin{definition}[Digital signature schemes]
Let $\mathcal{M}$ be a message space. A digital signature
scheme ${ \sf {Dss} = ({\sf Kg}, Sign, Vf) }$ is a triple of probabilistic polynomial
time algorithms:
	\begin{itemize}
		\item ${\sf Kg}{(1^n)}$ on input of a security parameter $1^n$ outputs a private key ${\sf sk}$ and a public key ${\sf pk}$;
		\item ${\sf Sign(sk},  M)$ outputs a signature $\sigma$ under secret key $\sf sk$ for message $M \in \mathcal{M}$;
		\item ${\sf Vf (pk}, \sigma, M)$ outputs 1 iff $\sigma$ is a valid signature on $M$ under $\sf pk$;
	\end{itemize}
such that $\forall {\sf (pk, sk) \leftarrow {\sf Kg}}{(1^n)}$,$\forall  (M \in \mathcal{M}):{\sf Vf}({\sf pk}, {\sf Sign}({\sf sk}, M),M) = 1$.
\end{definition}

Consider a signature scheme ${\sf Dss}(1^n)$, where $n$ is the security parameter.
A common definition for the security of ${\sf Dss}(1^n)$, which is known as the existential unforgeability under the adaptive chosen message attack (EU-CMA), is defined using the following experiment.
\newline \textbf{Experiment ${\sf Exp}^{\sf EU-CMA}_{{\sf Dss} (1^n)}(\mathcal{A})$}
\begin{itemize}
	\item[] ${\sf (sk, pk) \leftarrow {\sf Kg}}(1^n)$.
	\item[] $(M^{\star},\sigma^{\star}) \leftarrow \mathcal{A}^{\sf sign(sk,\cdot)}({\sf pk})$.
	\item[] $\{(M_i, \sigma_i)\}^{q}_{i=1}$ be the query answers for $\sf {Sign}(sk,\cdot)$.
	\item[] Return 1 iff $\rm{Vf}(pk,\sigma^{\star},M^{\star})=1$ and $M^{\star} \notin \{M_i\}^{q}_{i=1}$.
\end{itemize}
In our work we consider one-time signatures, so the number of allowed quires $q$ is set to 1.

Let $ {\sf Succ}^{\sf EU - CMA}_{{\sf Dss} (1^n)}(\mathcal{A}) = \Pr\big[ {\sf Exp}^{\sf EU - CMA}_{{\sf Dss} (1^n)}(\mathcal{A}) = 1\big]$
be the success probability of an adversary $\mathcal{A}$ in the above experiment.
\begin{definition}[EU-CMA]
	Let $t, n\in \mathbb{N}$, $t = {\rm poly}(n)$, $\rm{Dss}(1^n)$ is a digital signature scheme.
	We call $\rm{Dss}$ ${\rm EU-CMA}$-secure if the maximum success probability 
	${\rm InSec}^{\rm EU-CMA}({\rm Dss}(1^n),t)$ of all possibly probabilistic adversaries $\mathcal{A}$ running in time $\leq t$ is negligible in $n$:
	\begin{equation*}
		{\rm InSec}^{\rm EU - CMA}({\rm Dss}(1^n);t) \stackrel{\rm def}{=} \underset{\mathcal{A}}{\max} \left\{{\sf Succ}^{\sf EU - CMA}_{{\sf Dss} (1^n)}(\mathcal{A})\right\}={\rm negl}(n).
	\end{equation*}
\end{definition}

We then consider proof of the EU-CMA property for the W-OTS$^{+}$ scheme on the basis of the assumption that the scheme is constructed with the function family having some particular properties. 
Let us discuss these required properties in detail.

Consider a function family $\mathcal{F}_n=\{f_k : \{0,1\}^n\rightarrow\{0,1\}^n\}_{k\in\mathcal{K}_n}$, where $\mathcal{K}_n$ is some set.
We assume that it is possible to generate $k\rand\mathcal{K}_n$ and evaluate each function from $\mathcal{F}_n$ for given $n$ in ${\rm poly}(n)$ time.
Then, we require three basic security properties for $\mathcal{F}_n$: (i) it is one-way (OW), (ii) it has the second preimage resistance (SPR) property, and (iii) it has the undetectability (UD) property. 

The success probabilities of an adversary $\mathcal{A}$ against OW and SPR of $\mathcal{F}_n$ are defined as follows:
\begin{multline}
	{\rm Succ}^{\rm OW}_{\mathcal{F}_n}(\mathcal{A})= \\
	\Pr[k \rand \mathcal{K}_n, x \rand \{0,1\}^n, y = f_k(x),x' \leftarrow \mathcal{A}(k,y): y =f_k(x')]
\end{multline}
and
\begin{multline}
    {\rm Succ}^{\rm SPR}_{\mathcal{F}_n}(\mathcal{A})= \\
    \Pr
    [k \rand \mathcal{K}_n, x \rand \{0,1\}^n, x' \leftarrow \mathcal{A}(k,x):
     (x \neq x') \wedge (f_k(x)=f_k(x'))],
\end{multline}
respectively.
By using these notations, we introduce the basic definitions of OW and SPR.

\begin{definition}[One-wayness and second preimage resistance of a function family]
	We call $\mathcal{F}_n$ one-way (second preimage resistant), if the success probability of any adversary $\mathcal{A}$ running in time $\leq t$ against the OW (SPR) of $\mathcal{F}_n$ is negligible:
\begin{equation}
	{\rm InSec^{OW(SPR)}}(\mathcal{F}_n;t) \stackrel{\rm def}{=} \underset{\mathcal{A}}{\rm max}\{ {\rm Succ}^{\rm OW(SPR)}_{\mathcal{F}_n}(\mathcal{A})\} = {\rm negl}(n).
\end{equation}
\end{definition}

To define the UD property we first need to introduce a definition of the (distinguishing) advantage.
\begin{definition}[Advantage] 
	Given two distributions $\mathcal{X}$ and $\mathcal{Y}$ we define the advantage ${\rm Adv}_{\mathcal{X},\mathcal{Y}}(\mathcal{A})$ of an adversary $\mathcal{A}$ in distinguishing between these two distributions as follows:
	\begin{equation}
		{\rm Adv}_{\mathcal{X},\mathcal{Y}}(\mathcal{A}) = |\Pr \big[1\leftarrow \mathcal{A}(\mathcal{X})\big] - \Pr\big[1 \leftarrow\mathcal{A}(\mathcal{Y}) \big]|.
	\end{equation}
\end{definition}

Consider two distributions $\mathcal{D}_{{\rm UD},\mathcal{U}}$ and $\mathcal{D}_{{\rm UD}, \mathcal{F}_n}$ over $\{0,1\}^n \times \mathcal{K}_n$.
Sampling of an element $(u,k)$ from the first distribution $\mathcal{D}_{{\rm UD},\mathcal{U}}$ is realized in the following way: $u\rand \{0,1\}^n$, $k \rand \mathcal{K}_n$.
Sampling of an element $(u,k)$ from the second distribution $\mathcal{D}_{{\rm UD}, \mathcal{F}_n}$ is realized by sampling $k \rand \mathcal{K}_n$ and $x\rand\{0,1\}^n$, and then setting $u=f_k(x)$.
The advantage of an adversary $\mathcal{A}$ against the UD of $\mathcal{F}_n$ is defined as the distinguishing advantage between these distributions:
\begin{equation}
	{\rm Adv}^{{\rm UD}}_{\mathcal{F}_n}(\mathcal{A}) = {\rm Adv}_{\mathcal{D}_{{\rm UD}, \mathcal{U}},\mathcal{D}_{{\rm UD}, \mathcal{F}_n}}(\mathcal{A}).
\end{equation}
\begin{definition}[Undetectability]
	We call $\mathcal{F}_n$ undetectable, if the advantage of any adversary $\mathcal{A}$ against the UD property of $\mathcal{F}_n$ running in time $\leq t$ is negligible:
\begin{equation}
{\rm InSec^{UD}}(\mathcal{F}_n;t) \stackrel{\rm def}{=} \underset{\mathcal{A}}{\rm max}\{{\rm Adv}^{\rm UD}_{\mathcal{F}_n}(\mathcal{A})\} = {\rm negl}(n).
\end{equation}
\end{definition}

\subsection{The W-OTS$^+$ signature scheme}

Here we describe the construction of the W-OTS$^+$ signature scheme.
First of all, we define basic parameters of the scheme.
Let $n \in \mathbb{N}$ be the security parameter, and $m$ be the bit-length of signed messages, that is $\mathcal{M}=\{0,1\}^m$.
Let $w\in \mathbb{N}$ be so-called  Winternitz parameter, which determines a base of the representation that is used in the scheme.
Let us define the following constants:  
\begin{equation}
	l_1 = \left\lceil \frac{m}{\log (w)} \right\rceil, \quad l_2 = \left\lfloor \frac{\log (l_1(w-1))}{\log (w)} \right\rfloor + 1, \quad l=l_1+l_2.
\end{equation}
By using the described above function family $\mathcal{F}_n$, we define a chaining function $c_k^{i}(x,{\bf r})$ for $x\in\{0,1\}^n$,  ${\bf r} = (r_1,\ldots, r_j) \in \{ 0,1\}^{n \times j}$, and $j \geq i\geq 0$ as follows:
\begin{equation}
	c_k^{0}(x,{\bf r})=x, \quad c_k^{i}(x,{\bf r})=f_k(c_k^{i-1}(x,{\bf r}) \oplus r_i) \text{ for } i>0.
\end{equation}
In what follows ${\bf r}_{a,b}$ is a substiring $(r_a,\ldots,r_b)$ of ${\bf r}$ if $b>a$ or it is an empty string otherwise.

Now we are ready to define the basic algorithms of the W-OTS$^+$ scheme.

\textit{Key generation algorithm} (${\sf Kg}(1^n)$) consists of the following steps:
\begin{enumerate}
	\item Sample the values 
	\begin{equation}
		k \rand \mathcal{K}, \quad {\bf r}=(r_1, \ldots, r_{w-1}) \rand \{0,1\}^{n \times (w-1)}.
	\end{equation}
	\item Sample the secret signing key
	\begin{equation}
		{\sf sk} = ({\sf sk}_1, \ldots, {\sf sk}_l) \rand \{0,1\}^{n \times l}.		
	\end{equation}
	\item Compute the public key as follows:
	\begin{equation}
		{\sf pk} = ({\sf pk}_0, {\sf pk}_1, \ldots, {\sf pk}_l) = (({\bf r},k), c_k^{w-1}({\sf sk}_1,{\bf r}), 
	\ldots, c_k^{w-1}({\sf sk}_l,{\bf r})).
	\end{equation}
\end{enumerate}

\textit{Signature algorithm} (${\sf Sign}({\sf sk},M,{\sf {\bf r}})$) consists of the following steps:
\begin{enumerate}
	\item Convert $M$ to the base $w$ representation: $M = (M_1, \ldots, M_{l_1})$ with $M_i\in\{0,\ldots,w-1\}$. 
	\item Compute the checksum $C=\sum_{i=1}^{l_1} (w-1-M_i)$ and its base $w$ representation $C=(C_1,\ldots,C_{l_2})$.
	\item Set $B=(b_1,\ldots, b_l) = M || C$ as the concatenation of the base $w$ representations of $M$ and $C$.
	\item Compute the signature on $M$ as follows:
	\begin{equation}
		\sigma = (\sigma_1, \ldots, \sigma_l) = (c_k^{b_1}({\sf sk}_1,{\bf r}), \ldots, c_k^{b_l}({\sf sk}_l,{\bf r})).
	\end{equation}
\end{enumerate}

\textit{Verification algorithm} (${\sf Vf}({\sf pk},\sigma,M)$) consists of the following steps:
\begin{enumerate}
	\item Compute $(b_1,\ldots,b_l)$ as it is described in steps 1-3 of the signature algorithm.
	\item Do the following comparison:
	\begin{equation}
		{\sf pk}_{i} \stackrel{?}{=} c_k^{w-1-b_i}(\sigma_i,{\bf r}_{b_i+1,w-1}), \quad i\in\{1,\ldots,l\}.
	\end{equation}
	If the comparison holds for all $i$, return 1, otherwise return 0.
\end{enumerate}

We assume that the runtime of all three algorithm is determined by the evaluation of $f_{k}$, while time, which is required for other operations, in negligible.
Thus, the upper bound on the runtime of ${\sf Kg}$, ${\sf Sign}$, ${\sf Vf}$ is given by the value of $lw$.

\section{Security of W-OTS$^+$} \label{sec:security}

\subsection{Security proof}

In this section we consider the security proof of the W-OTS$^+$ scheme. 
The general line of our proof coincides with the one from Ref.~\cite{Husling}.
However there are important differences, which yield another expression for the resulting security value.

\begin{theorem} \label{thm:1} 
	Let $n, w, m \in \mathbb{N}$ and $w,m= {\rm poly}(n)$. Let $\mathcal{F}_n =\{f_k : \{0,1\}^n \rightarrow \{0,1\}^n\}_{k \in \mathcal{K}_n}$ be a one-way, second preimage resistant, and undetectable function family.
	Then, the insecurity of the W-OTS$^+$ scheme against an EU-CMA attack is bounded by
	\begin{multline} \label{eq:ourproof}
		{\rm InSec}^{\rm EU-CMA}(\text{\rm W-OTS}^+(1^n,w,m);t,1) \\
		< lw\cdot\left(
		w \cdot {\rm InSec^{UD}}(\mathcal{F}_n;\widetilde{t}) +
		{\rm InSec^{OW}}(\mathcal{F}_n;\widetilde{t}) + 
		w\cdot {\rm InSec^{SPR}}(\mathcal{F}_n;\widetilde{t})\right)    
	\end{multline}
	with $\widetilde{t} = t + 3 lw +w-2$, where time is given in number of evaluation function from $\mathcal{F}$.
\end{theorem}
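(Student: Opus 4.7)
The plan is to reduce EU-CMA security of \WOTS\ to the OW, SPR, and UD properties of $\mathcal{F}_n$, following the outline of Ref.~\cite{Husling} but with a sharper accounting of the reduction losses. The starting observation is the standard checksum argument: if $(M^\star,\sigma^\star)$ is a forgery with $M^\star\neq M$, then there exists a coordinate $\alpha\in\{1,\ldots,l\}$ with $b^\star_\alpha<b_\alpha$, so that $\sigma^\star_\alpha$ and the honestly generated $\sigma_\alpha$ both lie on the same $\alpha$-th hash chain and, after sufficiently many applications of $c_k$, must merge at some first index $\gamma^\star\in\{b^\star_\alpha+1,\ldots,w-1\}$. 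The merge either reveals a true preimage of the honest intermediate chain value (when the forged and honest inputs to $f_k$ at step $\gamma^\star$ coincide) or a second preimage (when they differ).

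I would then construct a reduction $\mathcal{B}$ that, on input a challenge from any of the three hardness games, samples a target pair $(\alpha,\beta)\rand\{1,\ldots,l\}\times\{1,\ldots,w\}$, plants the challenge as the value at position $\beta$ of the $\alpha$-th chain, iterates $c_k$ forward to complete ${\sf pk}_\alpha$, and generates the remaining $l-1$ chains honestly. To answer the single signing query on $M$ with base-$w$ digits $b$, $\mathcal{B}$ derives the $\alpha$-th signature component from the challenge if $b_\alpha\geq\beta$; otherwise it samples that component uniformly from $\{0,1\}^n$ and continues the chain from the sampled value. The correctness of this simulation is established by a hybrid argument of at most $w$ steps, each justified by UD, contributing the $w\cdot{\rm InSec}^{\rm UD}(\mathcal{F}_n;\widetilde t)$ term inside the parentheses of (\ref{eq:ourproof}); the outer factor $lw$ absorbs the probability of correctly guessing $(\alpha^\star,\gamma^\star)$.

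Once the guess matches $(\alpha^\star,\gamma^\star)$, which happens with probability at least $1/(lw)$, the forgery gives either a preimage of the planted challenge (winning OW, contributing the ${\rm InSec}^{\rm OW}$ term) or a distinct value colliding with it under $f_k$ (winning SPR, contributing $w\cdot{\rm InSec}^{\rm SPR}$). The main obstacle, and the step where the original proof in Ref.~\cite{Husling} contains the flaw we correct, is the precise treatment of the second-preimage case: one has to argue that a forger producing a second-preimage collision at some \emph{a priori} unknown chain position $\gamma^\star$ out of $w-1$ possibilities yields a solvable SPR instance only when $\mathcal{B}$'s guess $\beta$ agrees with $\gamma^\star$, and that the UD-based simulation of chain values below $\beta$ does not open new avenues for spurious collisions. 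This is what forces the additional $w$ factor in front of ${\rm InSec}^{\rm SPR}$. Finally, carefully counting the at most $3lw+w-2$ extra $f_k$-evaluations performed by $\mathcal{B}$ to simulate key generation, the signing query, verification of the forgery, and traversal of the planted chain yields the runtime bound $\widetilde t=t+3lw+w-2$.
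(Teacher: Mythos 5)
Your proposal contains a genuine gap, and it sits exactly at the point where the reduction is delicate. You propose that when the signing query has $b_\alpha<\beta$ (the query digit falls below the planted challenge), the reduction ``samples that component uniformly from $\{0,1\}^n$ and continues the chain from the sampled value.'' This simulation is broken: the public key ${\sf pk}_\alpha$ was already fixed and handed to $\mathcal{A}$ before the query, so a uniformly sampled $\sigma_\alpha$ satisfies the verification relation $c_k^{w-1-b_\alpha}(\sigma_\alpha,{\bf r}_{b_\alpha+1,w-1})={\sf pk}_\alpha$ only with negligible probability. The adversary can run ${\sf Vf}$ itself, detect that the answer to its query is not a valid signature, and behave arbitrarily (e.g.\ output nothing), so $\epsilon_{\mathcal{A}}$ no longer lower-bounds anything in your reduction. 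Undetectability cannot repair this, because the failure is a publicly checkable consistency constraint between $\sigma_\alpha$ and ${\sf pk}_\alpha$, not a distributional property of a single hash value; no hybrid over $w$ steps makes an invalid signature verify. The paper's reduction instead \emph{aborts} when $b_\alpha<\beta$, and the entire difficulty of the proof is accounting for how often this abort (and the analogous bad forgery position $b'_\alpha\geq\beta$) occurs.

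That accounting is the second, more fundamental gap. You assert that ``the outer factor $lw$ absorbs the probability of correctly guessing $(\alpha^\star,\gamma^\star)$,'' which tacitly assumes the adversary's query and forgery positions are independent of the reduction's hidden guess. This unjustified independence is precisely the flaw in Ref.~\cite{Husling} that the theorem corrects: since the planted public key is \emph{not} distributed like an honest one, $\mathcal{A}$ could in principle detect the challenge chain and always query $b_\alpha=0$ or forge only at positions $\geq\beta$, making the ``fortunate'' event arbitrarily rare. The paper handles this by running the same machine on two input distributions --- one with the challenge planted, one corresponding to honest key generation --- observing that the fortunate-forgery probabilities $\widetilde{\epsilon}_{\mathcal{A}}$ and $\widehat{\epsilon}_{\mathcal{A}}$ in the two experiments differ by at most the distinguishing advantage, bounding that advantage by $w\cdot{\rm InSec^{UD}}(\mathcal{F}_n;\widetilde t\,)$ via a hybrid over chain levels, and then proving $\widehat{\epsilon}_{\mathcal{A}}>\epsilon_{\mathcal{A}}/(lw)$ \emph{in the honest-key experiment}, where independence of $(\alpha,\beta)$ from the adversary's view holds by construction (the counting argument over the number $X$ of nonzero digits). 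Your proposal has no analogue of this two-experiment comparison; relatedly, you misidentify the flaw in the original proof as concerning the SPR position $\gamma^\star$, whereas it concerns the claimed bounds $\Pr[b_\alpha=\beta]\geq 1/w$ and $\Pr[b'_\alpha<\beta\mid\cdots]\geq 1/l$ for a possibly challenge-detecting adversary. Without the abort-and-compare structure, the stated bound~\eqref{eq:ourproof} does not follow from your reduction.
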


\begin{proof}
The proof is by contrapositive.
Suppose there exists an adversary $\mathcal{A}$ that can produce existential forgeries for W-OTS$^+(1^n,w,m)$ scheme by running an adaptive chosen message attack in time $\leq t$ with the success probability $\varepsilon_{\mathcal{A}} \equiv {\rm Succ}_{\text{W-OTS}(1^n, w, m)}^{\rm EU-CMA}(\mathcal{A})$.

Then we are able to construct an oracle machine $\mathcal{M}^{\mathcal{A}}$ that either breaks the OW or SPR of $\mathcal{F}_n$ using the adversary algorithm $\mathcal{A}$.
Consider a pseudo-code description of $\mathcal{M}^{\mathcal{A}}$ in Algorithm~\ref{alg:1} and block scheme in Fig.~\ref{fig:combined}(a).

The algorithm is based on the following idea.
We generate a pair of W-OTS$^{+}$ keys, and then introduce OW and SPR challenges in the $\alpha$th chain, 
where the index of the chain $\alpha$, position of the OW challenge $\beta$, and position of the SPR challenge $\gamma$ are picked up at random [see also Fig.~\ref{fig:combined}(b)].
Then we submit a modified public key ${\sf pk}'$ to $\mathcal{A}$.
The adversary can ask to provide a signature for some message $M$.
If the element $b_{\alpha}$ calculated from $M$ is less than $\beta$, that is it locates below our challenge $y_{c}$, then we are not able to generate a signature and we abort.
Otherwise, we compute the signature $\sigma$ with respect to our modified public key and give it to $\mathcal{A}$.
Finally, we obtain some forged message-signature pair $(M',\sigma')$, and if the forgery is valid then $\sigma'$ eventually contains the solution for the one of our challenges. 
Otherwise $\mathcal{M}^{\mathcal{A}}$ return {\sf fail}.

\begin{algorithm}[H]    \label{alg:1}
\small
	\DontPrintSemicolon
	\SetKwInOut{Input}{Input}\SetKwInOut{Output}{Output}
	\Input{Security parameter $n$, function key $k$, OW challenge $y_c$ and SPR challenge $x_c$.}
	\Output{A value $x$ that is either a preimage of $y_c$ (i.e. $f_k(x)=y$) or a second preimage for $x_c$ under $f_k$ (i.e. $f(x_c)=f(x)$ and $x\neq x_{c}$) or {\sf fail}.}
	Generate W-OTS$^+$ key pair: $({\sf sk},{\sf pk}) \leftarrow {\sf Kg}(1^n)$\;
	Choose random indices $\alpha \rand  \{1, \ldots, l\}, \beta \rand  \{1, \ldots, w-1\}$\;
	\eIf{ $\beta = w-1$}{set ${\bf r}'={\bf r}$}
	{    Choose random index $\gamma \rand  \{\beta+1 \ldots w-1\}$\;
	Set ${\bf r}'={\bf r}$ and replace $r'_{\gamma}$ by $c^{\gamma - \beta - 1}_{k}(y_c, {\bf r}_{\beta+1,w-1}) \oplus x_c$\;}
	Obtain modified public key ${\sf pk}'$ by setting ${\sf pk}_0' = ({\bf r}', k)$, ${\sf pk}_i' = c^{w-1}_{k}({\sf sk}_i,{\bf r}')$ for $1 \leq i \leq l, i \neq \alpha$, and
	${\sf pk}'_{\alpha} = c^{w-1-\beta}_k(y_c, {\bf r}'_{\beta+1,w-1})$\;
	Run $\mathcal{A}^{{\sf Sign(sk,\cdot)}}({\sf pk}')$\;
	\If{ $\mathcal{A}^{{\sf Sign(sk,\cdot)}}({\sf pk}')$ {\rm queries to sign message} $M$}
	{
	Compute $B = (b_1, \ldots, b_l)$ which corresponds to $M$\;
	\If{ $b_{\alpha} < \beta$} { \Return {\sf fail}}
	Generate signature $\sigma$ of $M$ with respect to the modified public key:\newline
		i. Run $\sigma = (\sigma_1, \ldots, \sigma_l) \leftarrow {\sf Sign}(M,{\sf sk,{\bf r}'})$\newline
		ii. Set $\sigma_{\alpha} = c^{b_{\alpha}-\beta}_k(y_c, {\bf r}'_{\beta+1,w-1})$\;
	Reply to the query using $\sigma$\;}
	\eIf{$\mathcal{A}^{{\sf Sign(sk,\cdot)}}({\sf pk}')$ {\rm returns valid} $(\sigma', M')$}{
	Compute $B' = (b_1', \ldots,b_l' )$  which corresponds to $M'$\;
	\uIf {$b_{\alpha}' \geq \beta$}{ \Return {\sf fail}} 
	\uElseIf{ $\beta = w-1$  {\bf or} {$c^{\beta-b_{\alpha}'}_{k}(\sigma'_{\alpha}, {\bf r}'_{b'_{\alpha}+1,w-1}) = y_c$} }{{\bf return} preimage $c^{w-1-b_{\alpha}'-1}_{k}(\sigma'_{\alpha}, {\bf r}'_{b'_{\alpha}+1,w-1}) \oplus r_\beta$}
	\uElseIf{  $x'= c^{\gamma-b_{\alpha}'-1}_k(\sigma'_{\alpha}, {\bf r}'_{b'_{\alpha}+1,w-1}) \oplus {\bf r}_{\gamma} \neq x_c$ {\bf and} 
	$c^{\gamma-b_{\alpha}'}_k(\sigma'_{\alpha}, {\bf r}'_{b'_{\alpha}+1,w-1}) = c^{\gamma - \beta}_{k}(y_c, {\bf r}_{\beta+1,w-1})$} {\Return second preimage $x'=c^{\gamma-b_{\alpha}'-1}_k(\sigma'_{\alpha}, {\bf r}'_{b'_{\alpha}+1,w-1})+r'_{\gamma}$.}
	\Else{ \Return {\sf fail} }
	}
	{ \Return {\sf fail} }
	\caption{$\mathcal{M}^{\mathcal{A}}$}
\end{algorithm}

\begin{figure}
	\centering
	\includegraphics[width=1\linewidth]{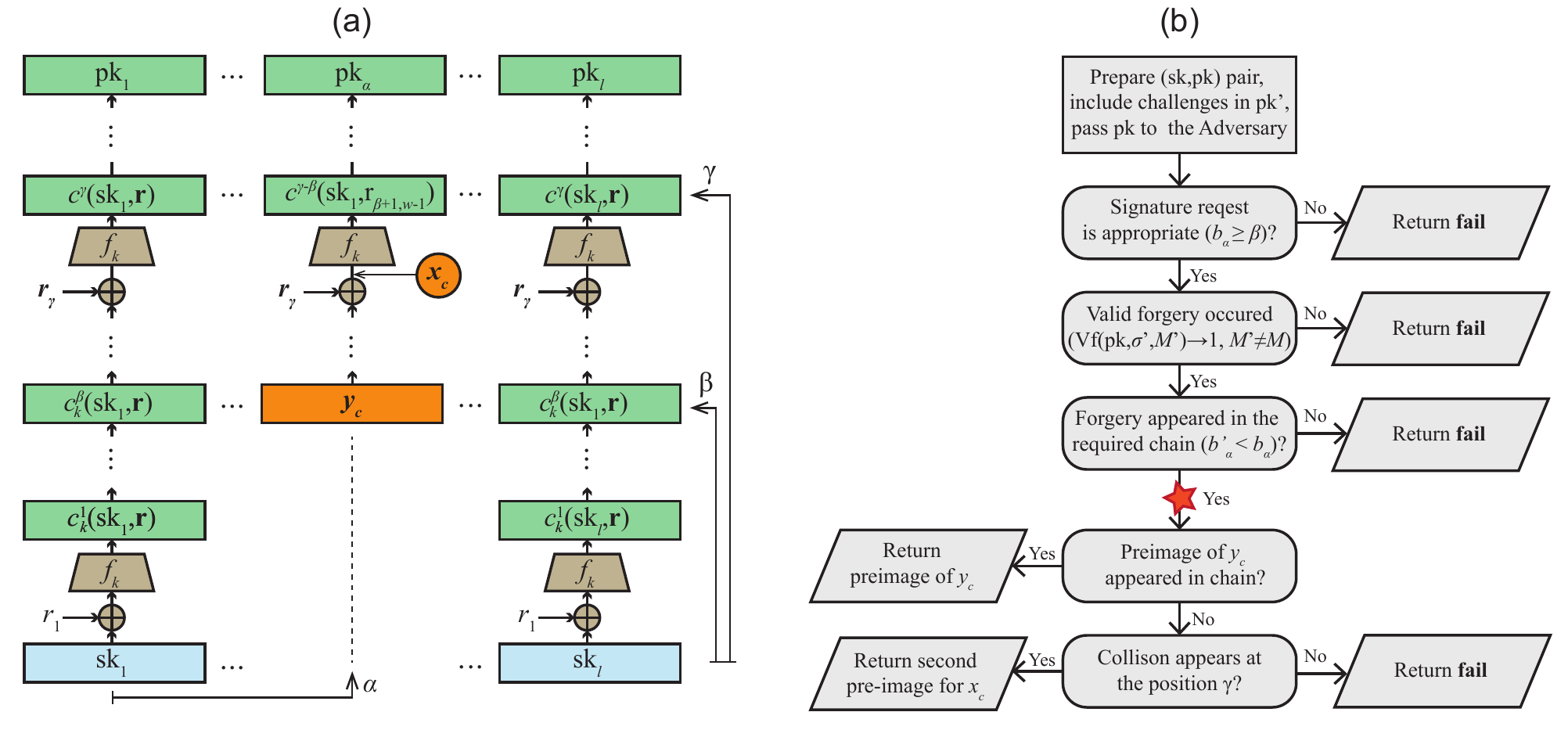}
	\caption{In (a) an introducing image and second pre-image challenges in the public key of the \WOTS~scheme is shown. }
	In (b) the block scheme of $\mathcal{M}^{\mathcal{A}}$ is depicted.
	A bullet marks a point for UD challenge.
	\label{fig:combined}
\end{figure}

We start with computing the success probability of $\mathcal{M}^{\mathcal{A}}$ in solving one of the challenges. 
Let $\widetilde{\epsilon}_{\mathcal{A}}$ be a probability that Algorithm~\ref{alg:1} execution comes to the line 20.
More formally, it can be written as follows:
\begin{equation} \label{eq:fort_forg}
	\widetilde{\epsilon}_{\mathcal{A}}=\Pr[b_{\alpha} \geq \beta \wedge \text{\rm ``Forgery~is~valid''} \wedge b'_{\alpha}<b_{\alpha}],
\end{equation}
where the event ``Forgery is valid'' stands for  $(1\leftarrow{\sf Vf (pk};\sigma' ; M')) \wedge (M'\neq M)$.
We denote the whole event of~Eq.~\eqref{eq:fort_forg} as ``Forgery is fortunate''.

We then can consider two mutually exclusive cases: either
(i) $\beta=w-1$ or the chain started from $\sigma_{\alpha}'$ come to $y_c$ at the $\beta$th level, or (ii) $\beta < w-1$ and the chain started from $\sigma_{\alpha}'$ does not come to $y_c$ at the $\beta$th level.
Let these two case realizing with probabilities $p$ and $(1-p)$ correspondingly conditioned by the event ``Forgery is fortunate''.

In the first case, the adversary $\mathcal{A}$ somehow found a preimage for the $y_c$.
The total probability of this event is upper bounded by ${\rm InSec^{OW}}(\mathcal{F}_n;\widetilde{t})$, so we can write
\begin{equation}\label{eq:OWbound}
	p\cdot \widetilde{\epsilon}_{\mathcal{A}} \leq {\rm InSec^{OW}}(\mathcal{F}_n;\widetilde{t}).
\end{equation}
The time $\widetilde{t}=t+3lw+w-2$ appears as the upper bound on the total running time of $\mathcal{A}$ plus each of the W-OTS$^+$ algorithms ${\sf Kg}$, ${\sf Sign}$, and ${\sf Vf}$ plus preparing $\alpha$th chain in ${\sf pk}'$ (see line 8 in Algorithm~\ref{alg:1}).

In the second case, we have a collision somewhere between $(\beta+1)$th and $(w-1)$ level.
If the collision appears at the level $\gamma$ we obtain the second preimage of $x_c$.
Since the SPR challenge was taken uniformly at random, the value of $r'_{\gamma}$ remains to be a uniformly random variable, therefore there is no way for $\mathcal{A}$ to detect and intentionally avoid the position $\gamma$.
Thus, we obtain the collision at the level $\gamma$ with probability $(w-1-\beta)^{-1}>w^{-1}$ conditioned by the event ``Forgery is fortunate''.
On the other hand, this probability is upper bounded by ${\rm InSec^{SPR}}(\mathcal{F}_n;\widetilde{t})$.
So we have
\begin{equation} \label{eq:SPRbound}
	(1-p) \frac{\widetilde{\epsilon}_{\mathcal{A}}}{w} < {\rm InSec^{SPR}}(\mathcal{F}_n;\widetilde{t}).
\end{equation}
Again, the time $\widetilde{t}=t+3lw+w-2$ appears as the upper bound on the total running time of our algorithm.

By combining Eq.~\eqref{eq:OWbound} and Eq.~\eqref{eq:SPRbound}, we obtain the following expression:
\begin{equation} \label{eq:Insecs}
	\widetilde{\epsilon}_{\mathcal{A}} < {\rm InSec^{OW}}(\mathcal{F}_n;\widetilde{t})+w\cdot {\rm InSec^{SPR}}(\mathcal{F}_n;\widetilde{t}).
\end{equation}

In the remainder of the proof we derive a lower bound for $\widetilde{\epsilon}_{\mathcal{A}}$ as the function of $\epsilon_{\mathcal{A}}$.
We note that in general $\mathcal{A}$ may behave in a `nasty' way making $\widetilde{\epsilon}_{\mathcal{A}}\ll \epsilon_{\mathcal{A}}$ e.g. by always asking to sign `bad' messages with $b_{\alpha}<\beta$ or avoiding forgeries in `good' positions $b'_{\alpha}>b_{\alpha}$.
In other words, the algorithm may avoid crossing the point shown in Fig.~\ref{fig:combined}(a).
This behaviour of $\mathcal{A}$ means that it can somehow reveal the challenge position from the modified public key ${\sf pk}'$.
We below consider the strategy of using this possible ability of $\mathcal{A}$ to break UD property.

Consider two distributions $\mathcal{D}_{\mathcal{M}}$ and $\mathcal{D}_{{\sf Kg}}$ over $\{1, \ldots, w-1\} \times \{0,1\}^n \times \{0,1\}^{n\times(w-1)} \times \mathcal{K}_{n}$.
An element $(\beta,u,{\bf r},k)$ is obtained from $\mathcal{D}_{\mathcal{M}}$ by generating all subelements $\beta$, $u$, ${\bf r}$, and $k$ uniformly at random from the corresponding sets.
At the same time, an element $(\beta,u,{\bf r},k)$ is obtained from $\mathcal{D}_{{\sf Kg}}$ by generating $\beta$, ${\bf r}$, and $k$ uniformly at random, but setting $u=c_{k}^{\beta}(x,{\bf r})$ with $x\rand \{0,1\}^{n}$.
One can see that $\mathcal{D}_{{\sf Kg}}$ corresponds to the generation of elements in W-OTS$^{+}$ signature chain from the secret key element up to the $\beta$th level.

Consider a pseudocode of Algorithm~\ref{alg:2} of a machine $\mathcal{M}'^{\mathcal{A}}$ taking the security parameter $n$ and an element from either $\mathcal{D}_{\mathcal{M}}$ or $\mathcal{D}_{{\sf Kg}}$ as input.
One can see that the operation of $\mathcal{M}'^{\mathcal{A}}$ is very similar to the operation of $\mathcal{M}^{\mathcal{A}}$.

Given an input $(\beta,u,{\bf r},k)$ from $\mathcal{D}_{\mathcal{M}}$, $\mathcal{M}'^{\mathcal{A}}$ sets $y_{c}=u$ and then works exactly as $\mathcal{M}$ up to line 19 of the Algorithm~\ref{alg:1}.
If the event ``Forgery is fortunate'' happens, then $\mathcal{M}'^{\mathcal{A}}$ returns 1. Otherwise, it returns 0.
So given an input $(\beta,u,{\bf r},k)$ from $\mathcal{D}_{\mathcal{M}}$, $\mathcal{M}'^{\mathcal{A}}$  outputs 1 with probability $\widetilde{\epsilon}_{\mathcal{A}}$.

\begin{algorithm}[h]   \label{alg:2}
\small
	\DontPrintSemicolon
	\SetKwInOut{Input}{Input}\SetKwInOut{Output}{Output}
	\Input{Security parameter $n$, a sample $(\beta,u,{\bf r},k)$.} 
	\Output{0 or 1.}
	Generate W-OTS$^+$ key pair: $({\sf sk},{\sf pk}) \leftarrow {\sf Kg}(1^n)$ taking bitmasks from ${\bf r}$ and a function for chain $f_{k}$ instead of random ones\;
	Choose random index $\alpha \rand  \{1, \ldots, l\}$\;
	Obtain modified public key ${\sf pk}'$ by setting ${\sf pk}_0' = ({\bf r}, k)$, ${\sf pk}_i' = c^{w-1}_{k}({\sf sk}_i,{\bf r}')$ for $1 \leq i \leq l, i \neq \alpha$, and
	${\sf pk}'_{\alpha} = c^{w-1-\beta}_k(u, {\bf r}'_{\beta+1,w-1})$\;
	Run $\mathcal{A}^{{\sf Sign(sk,\cdot)}}({\sf pk}')$\;
	\If{ $\mathcal{A}^{{\sf Sign(sk,\cdot)}}({\sf pk}')$ {\rm queries to sign message} $M$}
	{
	Compute $B = (b_1, \ldots, b_l)$ which corresponds to $M$\;
	\If{ $b_{\alpha} < \beta$} { \Return 0}
	Generate signature $\sigma$ of $M$ with respect to the modified public key:\newline
		i. Run $\sigma = (\sigma_1, \ldots, \sigma_l) \leftarrow {\sf Sign}(M,{\sf sk,{\bf r}'})$\newline
		ii. Set $\sigma_{\alpha} = c^{b_{\alpha}-\beta}_k(y_c, {\bf r}'_{\beta+1,w-1})$\;
	Reply to the query using $\sigma$\;}
	\If{$\mathcal{A}^{{\sf Sign(sk,\cdot)}}({\sf pk}')$ {\rm returns valid} $(\sigma', M')$}{
	Compute $B' = (b_1', \ldots,b_l' )$  which corresponds to $M'$\;
	\If {$b_{\alpha}' \geq \beta$}{\Return 0}
	\Else{\Return 1} }
	\caption{$\mathcal{M}'^{\mathcal{A}}$}
\end{algorithm}

Let us consider the behavior of $\mathcal{M}'^{\mathcal{A}}$ given an input from $\mathcal{D}_{{\sf Kg}}$.
In this case $\mathcal{A}$ obtains a fair W-OTS$^{+}$ public key.
The probability that $\mathcal{M}'^{\mathcal{A}}$ outputs 1 is thus given by
\begin{multline}\label{eq:neweps1}
	 \widehat{\epsilon}_{\mathcal{A}} \equiv \Pr[b_{\alpha} \geq \beta \wedge \text{\rm ``Forgery~is~valid''} \wedge b'_{\alpha}<b_{\alpha}] \\= \epsilon_{\mathcal{A}} \cdot \Pr[b_{\alpha} \geq \beta \wedge b'_{\alpha}<b_{\alpha} |  \text{\rm ``Forgery~is~valid''}]\\
	\geq \epsilon_{\mathcal{A}} \cdot \Pr[b_{\alpha} = \beta \wedge b'_{\alpha}<b_{\alpha} |  \text{\rm ``Forgery~is~valid''}].
\end{multline}
Here we used the fact that in the considered case $\Pr[ \text{\rm ``Forgery~is~valid''} ]=\epsilon_{\mathcal{A}}$.
Then we can write
\begin{multline} \label{eq:neweps2}
	\Pr[b_{\alpha} = \beta \wedge b'_{\alpha}<b_{\alpha} |  \text{\rm ``Forgery~is~valid''}] \\ =
	\Pr[b_{\alpha} = \beta |  \text{\rm ``Forgery~is~valid''}] \cdot
	\Pr[b'_{\alpha}<b_{\alpha}  | b_{\alpha} = \beta \wedge  \text{\rm ``Forgery~is~valid''}] 
\end{multline}
and consider each term of the RHS in detail.
Let $X$ be a random variable equal to a number of elements in the requested W-OTS$^{+}$ signature $\sigma$ which lie above the zero level, which is conditioned by the fact the the forgery produced by $\mathcal{A}$ is valid (if one gives $\sigma$ to  $\mathcal{A}$).
More formally we define $X$ as follows:
\begin{equation}
	X = |\{i: 1\leq i \leq l, b_i>0\}|\quad
	\text{conditioned by ``Forgery~is~valid''}.
\end{equation}
Since $\alpha$ and $\beta$ are chosen at random from the sets $\{1,\ldots,l\}$ and $\{1,\ldots,w-1\}$ we have
\begin{equation}\label{eq:aboutX1}
	\Pr[b_{\alpha} = \beta |  \text{\rm ``Forgery~is~valid''}] = \frac{X}{l(w-1)}>\frac{X}{lw}.
\end{equation}
Then, since the forged message $M'$ has at least one element in its signature $\sigma'$ which went down through its chain compared to the signature $\sigma$, and this element is certainly among $X$ elements, we have
\begin{equation} \label{eq:aboutX2}
	\Pr[b'_{\alpha}<b_{\alpha}  | b_{\alpha} = \beta \wedge  \text{\rm ``Forgery~is~valid''}]\geq \frac{1}{X}.
\end{equation}
Taking together Eqs.~\eqref{eq:neweps1}, \eqref{eq:aboutX1}, \eqref{eq:aboutX2}, and putting the result into Eq.~\eqref{eq:neweps1} we obtain
\begin{equation} \label{eq:boundeps}
	\widehat{\epsilon}_{\mathcal{A}} > \frac{\epsilon_{\mathcal{A}}}{lw}.
\end{equation}

By the definition, the advantage of distinguishing $\mathcal{D}_{\mathcal{M}}$ $\mathcal{D}_{{\sf Kg}}$ by $\mathcal{M}'^{\mathcal{A}}$ is given by
\begin{equation}\label{eq:absval}
	   {\rm Adv}_{\mathcal{D}_{\mathcal{M}},\mathcal{D}_{{\sf Kg}}}(\mathcal{M}'^{\mathcal{A}}) = |\widetilde{\epsilon}_{\mathcal{A}}-\widehat{\epsilon}_{\mathcal{A}}|.
\end{equation}
Using the obtained bound~\eqref{eq:boundeps} and expanding absolute value in Eq.~\eqref{eq:absval} we come to the following upper bound on ${\epsilon}_{\mathcal{A}}$:
\begin{equation} \label{eq:epsineq}
	\epsilon_{\mathcal{A}}<lw\cdot\left({\rm Adv}_{\mathcal{D}_{\mathcal{M}},\mathcal{D}_{{\sf Kg}}}(\mathcal{M}'^{\mathcal{A}})+\widetilde{\epsilon}_{\mathcal{A}}\right).
\end{equation}

The remaining step is to derive an upper bound of ${\rm Adv}_{\mathcal{D}_{\mathcal{M}},\mathcal{D}_{{\sf Kg}}}(\mathcal{M}'^{\mathcal{A}})$ using the maximal possible insecurity level of the UD property.
For this purpose we employ the hybrid argument method.
First, we note that 
\begin{equation} \label{eq:betasum}
	{\rm Adv}_{\mathcal{D}_{\mathcal{M}},\mathcal{D}_{{\sf Kg}}}(\mathcal{M}'^{\mathcal{A}}) = \sum_{\beta'=1}^{w-1} \frac{1}{w-1}{\rm Adv}_{\mathcal{D}^{\beta=\beta'}_{\mathcal{M}},\mathcal{D}^{\beta=\beta'}_{{\sf Kg}}}(\mathcal{M}'^{\mathcal{A}}),
\end{equation}
where $\mathcal{D}^{\beta=\beta'}_{\mathcal{M}}$ and $\mathcal{D}^{\beta=\beta'}_{{\sf Kg}}$ denote distributions with fixed first subelement $\beta=\beta'$.
Expression~\eqref{eq:betasum} leads to the fact that there must exist at least one value $\beta^\star$ such that
\begin{equation}\label{eq:predhybr}
	{\rm Adv}_{\mathcal{D}^{\beta=\beta^{\star}}_{\mathcal{M}},\mathcal{D}^{\beta=\beta^{\star}}_{{\sf Kg}}}(\mathcal{M}'^{\mathcal{A}}) \geq  {\rm Adv}_{\mathcal{D}_{\mathcal{M}},\mathcal{D}_{{\sf Kg}}}(\mathcal{M}'^{\mathcal{A}}) .
\end{equation}
Then we define a sequence of distributions $\{\mathcal{H}_{i}\}_{i=0}^{\beta^{\star}}$ over $\{1, \ldots, w-1\} \times \{0,1\}^n \times \{0,1\}^{n\times(w-1)} \times \mathcal{K}_{n}$, such that an element $(\beta,u,{\bf r},k)$ is generated from $\mathcal{H}_{i}$ by setting
\begin{equation}
	\beta = \beta^{\star}, \quad x\rand\{0,1\}^{n}, \quad u=c_k^{\beta^{\star} - i}(x, {\bf r}_{j+1,w-1}),
\end{equation}
and sampling ${\bf r}$ and $k$ uniformly at random from the corresponding spaces.
One can see that $\mathcal{H}_{0}$ and $\mathcal{H}_{\beta^{\star}}$ coincide with $\mathcal{D}_{{\sf Kg}}^{\beta=\beta^{\star}}$ and $\mathcal{D}^{\beta=\beta^{\star}}_{\mathcal{M}}$, correspondingly.
So, Eq.~\eqref{eq:predhybr} can be rewritten as follows:
\begin{equation}
	{\rm Adv}_{\mathcal{H}_{\beta^{\star}},\mathcal{H}_{0}}(\mathcal{M}'^{\mathcal{A}}) \geq  {\rm Adv}_{\mathcal{D}_{\mathcal{M}},\mathcal{D}_{{\sf Kg}}}(\mathcal{M}'^{\mathcal{A}}) .
\end{equation}
The triangular inequality yields the fact that there must exist two consecutive distributions $\mathcal{H}_{i^{\star}}$ and $\mathcal{H}_{i^{\star}+1}$ with $0\leq i^{\star} <\beta^{\star}$ such that
\begin{equation}\label{eq:ineqs}
	{\rm Adv}_{\mathcal{H}_{i^{\star}},\mathcal{H}_{i^{\star}+1}}(\mathcal{M}'^{\mathcal{A}})
	\geq \frac{1}{\beta^{\star}}
	{\rm Adv}_{\mathcal{D}_{\mathcal{M}},\mathcal{D}_{{\sf Kg}}}(\mathcal{M}'^{\mathcal{A}})  
	> \frac{1}{w}
	{\rm Adv}_{\mathcal{D}_{\mathcal{M}},\mathcal{D}_{{\sf Kg}}}(\mathcal{M}'^{\mathcal{A}}).
\end{equation}
We are ready to construct our final machine $\mathcal{B}^{ \mathcal{M}'^{\mathcal{A}} }$, shown in Algorithm~\ref{alg:3}, which employs $\mathcal{M}'^{\mathcal{A}}$ to break the UD property.

\begin{algorithm}[H]    \label{alg:3}
\small
	\DontPrintSemicolon
	\SetKwInOut{Input}{Input}\SetKwInOut{Output}{Output}
	\Input{Security parameter $n$, a sample $(u,k)$.} 
	\Output{0 or 1.}
	Generate ${\bf r}\rand \{0,1\}^{n(w-1})$\;
	Input $n$ and $(\beta^{\star},c_k^{\beta^{\star} - (i^{\star}+1)}(u, {\bf r}_{i^{\star}+1,w-1}),{\bf r},k)$ into $M'^{\mathcal{A}}$\;
	\Return the result from $M'^{\mathcal{A}}$
	\caption{$\mathcal{B}^{ \mathcal{M}'^{\mathcal{A}} }$}
\end{algorithm}

One can see that
\begin{equation} \label{eq:identity}
	{\rm Adv}_{\mathcal{D}_{{\rm UD},\mathcal{U}},\mathcal{D}_{{\rm UD}, \mathcal{F}_n}}(\mathcal{B}^{ \mathcal{M}'^{\mathcal{A}} })=
	{\rm Adv}_{\mathcal{H}_{i^{\star}},\mathcal{H}_{i^{\star}+1}}(\mathcal{M}'^{\mathcal{A}}),
\end{equation}
since the input to $\mathcal{M}'^{\mathcal{A}}$ with $(u,k)$ from $\mathcal{D}_{{\rm UD},\mathcal{U}}$ is equivalent to a sample from $\mathcal{H}_{i^{\star}+1}$, 
while this input to $\mathcal{M}'^{\mathcal{A}}$ with $(u,k)$ from $\mathcal{D}_{{\rm UD}, \mathcal{F}_n}$ is equivalent to a sample from $\mathcal{H}_{i^{\star}}$.
Indeed, 
\begin{equation}
	c_k^{\beta^{\star} - (i^{\star}+1)}(f_{k}(x), {\bf r}_{i^{\star}+1,w-1})=c_k^{\beta^{\star}- i^{\star}}(x\oplus r_{i^{\star}}, {\bf r}_{i^{\star},w-1})
\end{equation}
and $x\oplus r_{i^{\star}}$ is indistinguishable from the uniformly random string.
At the same time, we have 
\begin{equation} \label{eq:insecUD}
	{\rm Adv}_{\mathcal{D}_{{\rm UD},\mathcal{U}},\mathcal{D}_{{\rm UD}, \mathcal{F}_n}}(\mathcal{B}^{ \mathcal{M}'^{\mathcal{A}} }) \leq    {\rm InSec^{UD}}(\mathcal{F}_n;\widetilde{t}).
\end{equation}
The runtime bound $\widetilde{t}=t+3lw+w-2$ is obtained as sum of time $t$ required for $\mathcal{A}$, at most $3lw$ calculations of $f_{k}$ required in  ${\sf Kg}$, ${\sf Sign}$, and ${\sf Vf}$ used in $\mathcal{M}'^{\mathcal{A}}$, and at most $w-2$ calculations of $f_{k}$,
while preparing input for $\mathcal{M}'^{\mathcal{A}}$ in $\mathcal{B}^{ \mathcal{M}'^{\mathcal{A}} }$ (line 2 in Algorithm~\ref{alg:3}) and preparing $\alpha$th chain in $\mathcal{M}'^{\mathcal{A}}$ (line 3 in Algorithm~\ref{alg:2}) (the total number of $f_{k}$ evaluations is given by $w-1-(i^{\star}+1)\leq w-2$).

By combining together Eqs.~\eqref{eq:ineqs}, \eqref{eq:identity}, and \eqref{eq:insecUD} we obtain
\begin{equation}
	{\rm Adv}_{\mathcal{D}_{\mathcal{M}},\mathcal{D}_{{\sf Kg}}}(\mathcal{M}'^{\mathcal{A}})<w\cdot {\rm InSec^{UD}}(\mathcal{F}_n;\widetilde{t}).
\end{equation}
Then putting this result into Eq.~\eqref{eq:epsineq} we arrive at 
\begin{equation}
	\epsilon_{\mathcal{A}} < lw\cdot\left(w \cdot {\rm InSec^{UD}}(\mathcal{F}_n;\widetilde{t})+\widetilde{\epsilon}_{\mathcal{A}} \right)
\end{equation}
Finally, taking into account Eq.~\eqref{eq:Insecs} we obtain the desired upper bound.
\end{proof}

\begin{remark}
One can note that the bound $\widetilde{t}=t+3lw+w-2$ can be tightened at least to
$\widetilde{t}=t+3lw$ by firstly choosing the value $\alpha$ and then removing calculation of $\alpha$th chain within ${\sf Kg}$ used in $\mathcal{M}^{\mathcal{A}}$ and $\mathcal{M}'^{\mathcal{A}}$.
However, it has almost no practical value since usually is assumed that $t\gg4lw$.
\end{remark}

\subsection{Difference from the previous version of the proof}

Here we point out main differences between our security proof and the original proof from Ref.~\cite{Husling} that contains a slightly different security bound, namely:
\begin{multline}\label{eq:huslingproof}
	{\rm InSec}^{\rm EU-CMA}(\text{\rm W-OTS}^+(1^n,w,m);t,1) \\
	\leq 
	wl \cdot \max\left\{
	{\rm InSec^{OW}}(\mathcal{F}_n;t'),
	w\cdot {\rm InSec^{SPR}}(\mathcal{F}_n;t')
	\right\} + \\ w \cdot {\rm InSec^{UD}}(\mathcal{F}_n;t^{\star}) ,
\end{multline}
where $t'=t+3lw$ and $t^{\star}=t+3lw+w-1$.

First of all, during the discussion of $\mathcal{M}^{\mathcal{A}}$, that is the same in both proofs, it was stated that $\Pr[b_\alpha=\beta] \geq \frac{1}{w}$,
motivated by the fact that $\beta$ is chosen uniformly at random (see p. 181 of~\cite{Husling}).
However, as we discussed in our proof, $\mathcal{A}$ may reveal the chain containing challenges, and also may always ask to sign a message with $b_{\alpha}=0$ thus making $\Pr[b_\alpha=\beta]=0$.

In the proof of~\cite{Husling} it is stated that
$\Pr[b'_{\alpha}<\beta| \text{``Forgery is valid''} \wedge b_\alpha=\beta] \geq l^{-1}$.
This is also may not be correct if $\mathcal{A}$ is able to reveal the chain containing challenges and, e.g., make forgery only with $b'_{\alpha}=\beta$.
Actually, accounting a possibility of hostile behavior of $\mathcal{A}$ forces us to introduce the ``Forgery is fortunate'' event and bound its probability by employing ${\rm InSec}^{{\rm UD}}(\ldots)$.
We note that our treatment also gives a different factor before the term ${\rm InSec^{UD}}(\ldots)$.

Moreover, in Ref.~\cite{Husling} the obtained bound contains $\max\{{\rm InSec^{OW}}(\ldots), w{\rm InSec^{SPR}}(\ldots)\}$ instead of ${\rm InSec^{OW}}(\ldots)+w \cdot{\rm InSec^{SPR}}(\ldots)$.
Perhaps, it appeared by putting multiples $p$ and $(1-p)$ on the opposite side of inequalities corresponded to Eq.~\eqref{eq:OWbound} and Eq.~\eqref{eq:SPRbound} of the present paper.

Finally, the used different runtime bounds $t'$ and $t^{\star}$ for breaking OW/SPR and UD of $\mathcal{F}_{n}$, however, as it is shown above they can be considered to be the same.

Anyway, as we demonstrate below, both expressions~\eqref{eq:ourproof} and ~\eqref{eq:huslingproof} provide close levels of security.
Moreover, we note that the security level of W-OTS$^{+}$ used in the security proof of SPHINCS coincides with the derived expression~\eqref{eq:ourproof} (see $\#_{ots}$ term on page 382 of~\cite{SPHINCS}).

\subsection{Security level}

Given results of the Theorem~\ref{thm:1}, we are able to compute the security level against classical and quantum attacks. 
Following reasoning from Refs.~\cite{Husling,Len04}, we say the scheme has security level $b$ if a successful attack is expected to require $2^{b-1}$ evaluations of functions from $\mathcal{F}_{n}$.
We calculate lower bound on $b$ by considering the inequality ${\rm InSec}^{\rm EU-CMA}(\text{\rm W-OTS}^+(1^n,w,m);t,1)\geq 1/2$.
We assume that 
\begin{equation}
	{\rm InSec^{\rm OW}}(\mathcal{F}(n);t)={\rm InSec}^{\rm SPR}(\mathcal{F}(n);t)={\rm InSec}^{\rm UD}(\mathcal{F}(n);t)=\frac{t}{2^n}
\end{equation}
for brute force search attacks with classical computer~\cite{Dods}, and 
\begin{equation}
	{\rm InSec^{\rm OW}}(\mathcal{F}(n);t)={\rm InSec}^{\rm SPR}(\mathcal{F}(n);t)={\rm InSec}^{\rm UD}(\mathcal{F}(n);t)=\frac{t}{2^{n/2}}
\end{equation}
for attack with quantum computer using Grover's algorithm~\cite{Grover}.
We also assume that $t\gg 4lw$, so all runtime bounds used in~\eqref{eq:ourproof} and ~\eqref{eq:huslingproof} are the same: $\widetilde{t}\approx t' \approx t^{\star} \approx t$.
The results of comparison are shown in Table~\ref{tab:1}.
The new bound is smaller the previous one by $\log\frac{l(2w+1)}{lw+1}\approx 1$ bit for typical parameter values $w=16$ and $l=67$.

\begin{table}[h]
	\centering
	\begin{tabular}{c|c|c}
		& Bound from~\cite{Husling} & Bound from present work\\ \hline
		Classical attacks & 
		$b> n-\log w-\log(lw+1)$ & $b>n-\log(lw)-\log(2w+1)$ \\
		Quantum attacks & $b>\frac{n}{2}-\log w-\log(lw+1)$ & $b>\frac{n}{2}-\log(lw)-\log(2w+1)$ \\
	\end{tabular}
	\caption{Comparison of security levels for the W-OTS$^{+}$ scheme.}
	\label{tab:1}
\end{table}

\section{Conclusion and outlook} \label{sec:concl}

Here we summarize the main results of our work.
We have recapped the security analysis of the W-OTS$^{+}$ signature presented in Ref.~\cite{Husling}, and pointed out some of its flaws.
Although the updated security level almost coincides with the one from Ref.~\cite{Husling}, we believe that our contribution is important for a fair justification of the W-OTS$^{+}$ security.

We note that a security analysis of the many-times stateless signature scheme SPHINCS$^{+}$, which uses W-OTS$^{+}$ a basic primitive and which was submitted to NIST process~\cite{SPHINCS+submission}, originally was based on another approach for evaluating the security level~\cite{Mitigating}. 
However, it was discovered that the employed security analysis has some critical flaws~(see C.J. Peikert official comment on Round 1 SPHINCS$^{+}$ submission~\cite{SPHINCS+comment}).

Recently, a new approach for the security analysis of hash-based signature was introduced~\cite{WEBSITE:3}.
It suggests a novel property of hash functions, namely the decisional second-preimage resistance, and therefore requires an additional deep comprehensive study.

%\section*{Acknowledgments}
%This work is supported by Russian Foundation for Basic Research (18-37-20033).

\def\refname{References}

\end{document}